\documentclass{article}
\usepackage{arxiv}
\usepackage{amsthm}
\usepackage{amsmath}
\usepackage[T1]{fontenc}
\usepackage{hyperref}
\usepackage{url}
\usepackage{booktabs}
\usepackage{amsfonts}
\usepackage{nicefrac}
\usepackage{microtype}
\usepackage{cleveref}
\usepackage{graphicx}
\usepackage{tikz}
\usetikzlibrary{positioning,calc,decorations.pathreplacing,patterns,arrows.meta,shapes,backgrounds}
\usepackage{natbib}
\usepackage{doi}
\usepackage{xcolor}
\definecolor{darkgreen}{RGB}{0,80,0}
\usepackage{pgfplots}
\pgfplotsset{compat=1.18}
\usepackage{listings}
\usepackage[many]{tcolorbox}  
\usepackage{fontspec}
\usepackage{xparse}
\usepackage{expl3}
\usepackage{tikz}
\usepackage{array}

\definecolor{jnoun}{RGB}{247, 194, 214}
\definecolor{jverb}{RGB}{217 211 255}
\definecolor{jconjunction}{RGB}{173 242 211}
\definecolor{jadverb}{RGB}{195, 239, 252}
\definecolor{jcontrol}{RGB}{255 187 174}
\definecolor{jother}{gray}{0.92}

\tcbset{jstyle/.style={
  enhanced,
  nobeforeafter,
  tcbox raise base,
  boxrule=0.4pt,
  top=0.0mm,
  bottom=0.0mm,
  right=0mm,
  left=0mm,
  arc=2pt,
  boxsep=2pt,
  before upper={\vphantom{[,}},
  colframe=black!50!white,
  fontupper=\ttfamily,
  fontlower=\ttfamily,
}}

\newcommand{\bo}{\text{\ttfamily (}}
\newcommand{\bc}{\text{\ttfamily )}}
\newcommand{\jv}[2][s]{%
  \ifstrequal{#1}{r}%
    {\tcbox[jstyle, colback=jverb]{#2}}%
    {\tcbox[jstyle, colback=jverb, sharp corners]{#2}}%
}
\newcommand{\jn}[1]{%
  \tcbox[jstyle, colback=jnoun, sharp corners]{#1}%
}
\newcommand{\ja}[1]{%
  \tcbox[jstyle, colback=jadverb, sharp corners]{#1}%
}
\newcommand{\jc}[1]{%
  \tcbox[jstyle, colback=jconjunction, sharp corners]{#1}%
}
\newcommand{\jo}[1]{%
  \tcbox[jstyle, colback=jother, sharp corners]{#1}%
}
\newcommand{\jt}[1]{%
  \tcbox[jstyle, colback=jcontrol, sharp corners]{#1}%
}

\tikzset{
  jcell/.style={
    draw=black!50!white,
    fill=jnoun,
    line width=0.4pt,
    sharp corners,
    inner sep=2pt,
    font=\ttfamily,
    minimum width=2.8em,
    minimum height=2.8em,
    text centered,
  }
}

\newcommand{\jequiv}{\mathrel{\raisebox{0.15ex}{$\equiv$}}}
\newcommand{\jnsm}{1.1em,1.0em,0.1em,0.15em,0.15em}
\newcommand{\jnlg}{1.4em,1.0em,0.1em,0.15em,0.15em}

\newlength{\jcellwidth}
\newlength{\jcellheight}
\newlength{\jpadx}
\newlength{\jpadtop}
\newlength{\jpadbottom}

\newlength{\jnarraywidth}
\newlength{\jnarrayheight}
\newlength{\jnxleft}
\newlength{\jnytopleft}
\newlength{\jnxcenter}
\newlength{\jnycenter}
\newlength{\jntmpa}

\ExplSyntaxOn

\seq_new:N \l_jn_data_seq

\cs_new_protected:Npn \jnarrayparse #1
  {
    \seq_set_split:Nnn \l_jn_data_seq { ~ } { #1 }
  }

\cs_new:Npn \jnarrayitem #1
  {
    \seq_item:Nn \l_jn_data_seq { #1 }
  }

\ExplSyntaxOff

\def\jnsetdimensions#1,#2,#3,#4,#5\relax{%
  \setlength{\jcellwidth}{#1}%
  \setlength{\jcellheight}{#2}%
  \setlength{\jpadx}{#3}%
  \setlength{\jpadtop}{#4}%
  \setlength{\jpadbottom}{#5}%
}

\NewDocumentCommand{\jnarray}{mmmmgggg}{%
  \expandafter\jnsetdimensions#4\relax
  \begin{tikzpicture}[
    baseline={([yshift=-0.7ex]current bounding box.center)}
  ]
    \def\jncols{1}%
    \def\jncount{1}%
    \ifnum\pdfstrcmp{#1}{1x2}=0
      \def\jncols{2}%
      \def\jncount{2}%
    \fi
    \ifnum\pdfstrcmp{#1}{1x3}=0
      \def\jncols{3}%
      \def\jncount{3}%
    \fi
    \ifnum\pdfstrcmp{#1}{2x2}=0
      \def\jncols{2}%
      \def\jncount{4}%
    \fi
    \setlength{\jnarraywidth}{\jcellwidth}%
    \multiply\jnarraywidth by #3
    \setlength{\jntmpa}{\jpadx}%
    \multiply\jntmpa by 2
    \addtolength{\jnarraywidth}{\jntmpa}%
    \setlength{\jnarrayheight}{\jcellheight}%
    \multiply\jnarrayheight by #2
    \addtolength{\jnarrayheight}{\jpadtop}%
    \addtolength{\jnarrayheight}{\jpadbottom}%
    \foreach \slot in {1,2,3,4}{%
      \ifnum\slot>\jncount
      \else
        \pgfmathtruncatemacro{\row}
          {int((\slot-1)/\jncols)+1}%
        \pgfmathtruncatemacro{\col}
          {mod(\slot-1,\jncols)+1}%
        \setlength{\jnxleft}{\jnarraywidth}%
        \multiply\jnxleft by \numexpr\col-1\relax
        \setlength{\jnytopleft}{\jnarrayheight}%
        \multiply\jnytopleft by \numexpr\row-1\relax
        \jnytopleft=-\jnytopleft
        \draw[
          fill=jnoun,
          draw=black!50!white,
          line width=0.4pt
        ]
          (\jnxleft,\jnytopleft)
          rectangle
          ++(\jnarraywidth,-\jnarrayheight);
        \ifcase\slot
          \or \jnarrayparse{#5}
          \or \jnarrayparse{#6}
          \or \jnarrayparse{#7}
          \or \jnarrayparse{#8}
        \fi
        \foreach \r in {1,...,#2}{%
          \foreach \ccol in {1,...,#3}{%
            \pgfmathtruncatemacro{\idx}
              {(\r-1)*#3+\ccol}%
            \setlength{\jnxcenter}{\jnxleft}%
            \addtolength{\jnxcenter}{\jpadx}%
            \setlength{\jntmpa}{\jcellwidth}%
            \multiply\jntmpa by \numexpr\ccol-1\relax
            \addtolength{\jnxcenter}{\jntmpa}%
            \setlength{\jntmpa}{\jcellwidth}%
            \divide\jntmpa by 2
            \addtolength{\jnxcenter}{\jntmpa}%
            \setlength{\jnycenter}{\jnytopleft}%
            \addtolength{\jnycenter}{-\jpadtop}%
            \setlength{\jntmpa}{\jcellheight}%
            \multiply\jntmpa by \numexpr\r-1\relax
            \addtolength{\jnycenter}{-\jntmpa}%
            \setlength{\jntmpa}{\jcellheight}%
            \divide\jntmpa by 2
            \addtolength{\jnycenter}{-\jntmpa}%
            \edef\jnvalue{\jnarrayitem{\idx}}%
            \node[
              anchor=center,
              inner sep=0pt,
              font=\ttfamily
            ]
              at (\jnxcenter,\jnycenter)
              {\jnvalue};
          }%
        }%
      \fi
    }%
  \end{tikzpicture}%
}

\newtheorem{definition}{Definition}[section]
\newtheorem{lemma}[definition]{Lemma}
\newtheorem{theorem}[definition]{Theorem}
\newtheorem{corollary}[definition]{Corollary}

\title{Expressing NumPy Broadcasting\\ via Verb Rank in J}
\newif\ifuniqueAffiliation
\ifuniqueAffiliation 
\author{ \href{https://orcid.org/0009-0005-4803-5418}{\includegraphics[scale=0.06]{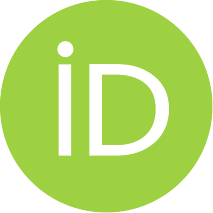}\hspace{1mm}Marcin Żołek}
}
\else
\usepackage{authblk}

\newbox{\orcid}\sbox{\orcid}{\includegraphics[scale=0.06]{orcid.pdf}} 
\author[1,2]{%
	\href{https://orcid.org/0009-0005-4803-5418}{\usebox{\orcid}\hspace{1mm}Marcin Żołek}\thanks{\texttt{mzolek@jsoftware.com}}
}
\affil[1]{Faculty of Mathematics, Informatics and Mechanics, University of Warsaw, Warsaw, Poland}
\affil[2]{Jsoftware Inc., Toronto, Canada}
\fi
\renewcommand{\undertitle}{\textcolor{darkgreen}{A Preprint
}}
\renewcommand{\shorttitle}{Expressing NumPy Broadcasting via Verb Rank in J}

\hypersetup{
pdftitle={Expressing NumPy Broadcasting via Verb Rank in J},
pdfsubject={cs.PL, cs.MS},
pdfauthor={Marcin Żołek},
pdfkeywords={array programming, broadcasting, NumPy, J programming language, multidimensional arrays}}
\begin{document}
\maketitle
\begin{abstract}
The array programming paradigm applies operations to entire arrays rather than individual elements, eliminating explicit loops and abstracting many low-level details of computation. Two influential approaches have shaped array programming: NumPy’s broadcasting and Kenneth Iverson’s array-oriented notation, with the latter first introduced in APL and later developed in the J language. Although NumPy's array programming model was influenced by Iverson's work, the two systems formalize array operations differently: NumPy primarily through array shapes and broadcasting, and J through verb rank. This paper establishes a formal correspondence between NumPy broadcasting and J rank and presents an implementation of NumPy broadcasting in J. The results provide a formal connection between two influential approaches to array-oriented computation and show how they address the common problem of applying operations to arrays with differing shapes without explicit loops.
\end{abstract}

\keywords{array programming, broadcasting, NumPy, J programming language, multidimensional arrays}

\section{Introduction}
Array programming treats arrays as fundamental computational objects, allowing operations to be performed on entire datasets rather than individual elements. A key question in this paradigm is how a function should be applied to data when its arguments have different shapes or dimensionalities. Two influential approaches to this issue are found in J programming language and Python's NumPy library: verb rank in J \cite{hui1995} and broadcasting in NumPy \cite{vanderwalt2011}.

In J, the rank of a function determines the dimensionality of the subarrays, called cells, on which it operates. For example, a rank-2 function is applied independently to corresponding 2-dimensional cells of its arguments, with the resulting cells assembled into a single array of appropriate shape. This mechanism allows functions to operate on multidimensional data without requiring explicit iteration over the cells of their arguments \cite{burke1996}.

A separate mechanism, known as broadcasting, governs the application of operations to arrays in NumPy. Broadcasting defines rules for determining when operands with different shapes are compatible and how their dimensions are aligned. In particular, dimensions of length one can be expanded to match the corresponding dimensions of another operand \cite{broadcasting}.

Both approaches allow the computation to be expressed without explicit loops, but they achieve this through different mechanisms. This observation raises the central question of this article: how does NumPy broadcasting relate to how J applies functions to arrays? We investigate this relationship formally and show how broadcasting can be expressed in terms of verb rank.

\section{J code representation}
Throughout this article, J syntax is highlighted according to the color scheme used in the official documentation \cite{nuvoc}. Different colors distinguish parts of speech in the J programming language: \jn{nouns} (arrays) and three kinds of operators -- \jv{verbs}, \ja{adverbs}, \jc{conjunctions} -- making the code examples easier to follow. Outer rounded frames indicate when an adverb or conjunction is applied to verbs or nouns. Their color represents the part of speech of the result. For example, the expression $$\jv[r]{\jv{-}\jc{@:}\jv{+}}$$ creates a verb, so the outer frame is purple.

To keep the examples focused, the code for constructing arrays is omitted, and the arrays are represented using the format used by the Dissect add-on, which visualizes data flow in J expressions \cite{dissect}. Examples of arrays from 0-dimensional atoms to 4-dimensional hyperbricks are presented below. Italics are used to distinguish 0-dimensional arrays from 1-dimensional ones with a single element.
\[
\jn{\textit{0}}, \jn{0 1 2}, \jnarray{1x1}{2}{3}{\jnsm}{0 1 2 3 4 5}, \jnarray{1x2}{2}{2}{\jnsm}{0 1 2 3}{4 5 6 7}, \jnarray{2x2}{2}{2}
  {\jnlg}
  {0 1 2 3}
  {4 5 6 7}
  {8 9 10 11}
  {12 13 14 15}
\]
Additionally, the $\equiv$ sign is used to indicate that two expressions evaluate to the same result. For example: $$\jn{\textit{2}}\jv+\jn{\textit{3}} \jequiv \jn{\textit{5}}.$$ In particular, $\equiv$ is not part of J syntax.

\section{Methods} 
Both NumPy and J use partially different terminology for the same concepts. Some terms, such as shape, are common to both J and NumPy, while others, such as rank, were also used in the documentation of the original Numerical Python \cite{numericalpython}. For concepts shared by both, we will use a common terminology, primarily derived from J.  
\begin{definition}[Array rank]
The rank of an array is the number of its dimensions.
\end{definition}
\begin{definition}[Array shape]
The shape of an array is a rank-1 array containing the lengths of all its dimensions.
\end{definition}
\begin{definition}[$k$-cell and frame]
The $k$-cell of an array is a subarray of rank $k$ whose shape is a suffix of the array's shape. The frame is the prefix of the array's shape that precedes the shape of its $k$-cells.
\end{definition}
These definitions are illustrated in \cref{fig:defs}.
\begin{figure}[ht] 
\begin{center}
\begin{tikzpicture}
\node[draw, minimum height=6mm ] (r1) {$d_n,d_{n-1},...,d_{k+1}$};
\node[draw, minimum height=6mm, right=0mm of r1.east, anchor=west] (r2) {$d_k,...,d_2,d_1$};

\draw[decorate,decoration={brace,mirror,amplitude=4pt}]
  ($(r1.south west)+(0,-2pt)$) -- ($(r1.south east)+(0,-2pt)$)
  node[midway, below=3pt] {{\footnotesize{frame}}};

\draw[decorate,decoration={brace,mirror,amplitude=4pt}]
  ($(r2.south west)+(0,-2pt)$) -- ($(r2.south east)+(0,-2pt)$)
  node[midway, below=3pt] {{\footnotesize{shape of $k$-cells}}};
\end{tikzpicture}
\end{center}
\caption{Schematic partition of the shape $d_n,...,d_1$ of an array of rank $n$ into a frame and the shape of $k$-cells.}
\label{fig:defs}
\end{figure}

\subsection{Verb rank in J}
\label{subsec:verb-rank}
The notion of rank in array programming goes back to Kenneth Iverson's APL, a pioneering array programming language. Iverson adopted the term rank from tensor algebra, where it was used to describe the number of indices associated with a tensor \cite{iverson1962}. In APL, rank initially referred only to arrays. J retains the notion of rank for arrays and also associates ranks with functions; these are referred to as verb ranks.

A verb rank determines the cells of arrays on which the verb operates. A verb may be applied to one or two arguments. An application to one argument is called monadic, while an application to two arguments is called dyadic. In the monadic case, we say that the verb has a monadic rank $r$, meaning that it is applied to the $r$-cells of its argument, or to the entire argument if its rank is at most $r$. In the dyadic case, we say that the verb has a dyadic rank consisting of two numbers, $l$ and $r$, meaning that it operates on the $l$-cells of its left argument and the $r$-cells of its right argument, or on the entire argument when the argument's rank does not exceed the verb's rank. A verb may allow monadic application, dyadic application, or both.

For example, the verb \jv{-} (minus) has monadic rank \jn{0} and returns an array obtained by taking the arithmetic negation of each 0-cell (scalar) of the right argument. Its dyadic rank is \jn{0 0} which means that the result is constructed by subtracting each 0-cell of the right argument from the corresponding 0-cell of the left argument. The correspondence between cells when the frames of the arguments differ is defined in the following subsections. In general, the rank of a verb and its argument(s) do not determine the rank of the result. An example is the verb \jv[r]{\jv{\$}\jc{\&}\jn{0}}, which, when applied to a rank-1 argument, produces an array filled with zeros whose shape is specified by the elements of that argument.

\subsubsection{Monadic rank}
Let the verb $u$ have monadic rank $r$ and be applied to an array. The shape of the array is viewed in terms of its frame and cell shape. If the rank of the array is at most $r$, the entire array is treated as a single cell with an empty frame. Otherwise, the shape of the cells is given by the suffix of the array shape of length $r$, while the remaining prefix forms the frame. To cover both cases uniformly, we denote the rank of the cells by $r'$. The verb $u$ is then applied to each $r'$-cell. The results are collected into the frame, which means that the frame of the argument is also the frame (prefix of the shape) of the result. This is illustrated in \cref{fig:monadic}. If $u$ returns arrays of different shapes (possibly also different ranks) for different $r'$-cells of the argument, the result cells are brought to a common shape by adding leading axes as necessary and padding with the fill value, so that they can be collected into the frame. Each data type has a default fill value (we omit here the details related to the possibility of changing this value).
\begin{figure}
\begin{center}
\begin{tikzpicture}
\node[draw, minimum width=15mm, minimum height=6mm,
      pattern=north east lines, pattern color=black] (r1) {};
\node[draw, minimum width=10mm, minimum height=6mm, right=0mm of r1] (r2) {};

\node[draw, minimum width=15mm, minimum height=6mm,
      pattern=north east lines, pattern color=black, right=10mm of r2] (w1) {};
\node[draw, minimum width=7mm, minimum height=6mm, right=0mm of w1] (w2) {};

\coordinate (argcenter) at ($(r1.west)!0.5!(r2.east)$);
\coordinate (rescenter) at ($(w1.west)!0.5!(w2.east)$);

\node[above=32pt of argcenter] {\normalsize Argument's shape};
\node[above=32pt of rescenter] {\normalsize Result's shape};

\draw[decorate,decoration={brace,amplitude=4pt}]
  ($(r1.north west)+(0,2pt)$) -- ($(r1.north east)+(0,2pt)$)
  node[midway, above=3pt] {\footnotesize frame};

\draw[decorate,decoration={brace,mirror,amplitude=4pt}]
  ($(r2.south west)+(0,-2pt)$) -- ($(r2.south east)+(0,-2pt)$)
  node[midway, below=3pt, align=center] {\footnotesize shape of \\[-3pt] \footnotesize $r'$-cells};

\draw[decorate,decoration={brace,amplitude=4pt}]
  ($(w1.north west)+(0,2pt)$) -- ($(w1.north east)+(0,2pt)$)
  node[midway, above=3pt] {\footnotesize frame};

\draw[decorate,decoration={brace,mirror,amplitude=4pt}]
  ($(w2.south west)+(0,-2pt)$) -- ($(w2.south east)+(0,-2pt)$)
  node[midway, below=3pt, align=center] {\footnotesize shape of \\[-3pt] \footnotesize result cells};
\end{tikzpicture}
\end{center}
\caption{Schematic partitions of the argument's shape and the resulting array's shape. The same frame is a prefix of both shapes.}
\label{fig:monadic}
\end{figure}

For example, the verb \jv[r]{\jv[r]{\jv{-}\ja{/}}\jc{\ .}\jv{*}} has monadic rank \jn{2} and in the monadic case it calculates the determinant of a matrix. When the argument has shape \jn{3 2 2}, the resulting 0-cell from each matrix is collected into an array of shape \jn{3}:
\[
\jv[r]{\jv[r]{\jv{-}\ja{/}}\jc{\ .}\jv{*}} \jnarray{1x3}{2}{2}{\jnsm}{2 2 2 2}{1 2 3 4}{4 1 2 3} \jequiv \jn{0 \_2 10}.
\]
The verb \jv[r]{\jv+\ja{/}} has monadic rank equal to \jn{\_} ($\infty$). This means that, whenever this verb is applied, the rank of the argument is no greater than the rank of the verb. Hence, the entire argument is treated as a single cell. The action of this verb is equivalent to inserting \jv+ between the items of the cell, where an item is a subarray of rank one less than that of the cell. In the following example, each item has rank 1.
\[
\jv[r]{\jv+\ja{/}} \jnarray{1x1}{3}{3}{\jnsm}{0 1 2 3 4 5 6 7 8} \jequiv \jn{0 1 2}\jv+ \jn{3 4 5}\jv+\jn{6 7 8} \jequiv \jn{9 12 15}
\]
\subsubsection{Dyadic rank}
Let the verb $u$ have dyadic rank $l,r$ and be applied to two arrays. Analogously to the monadic case, the shape of each argument is viewed in terms of its frame and cell shape, with $l$ determining the rank of the left cells and $r$ determining the rank of the right cells. We denote the ranks of the left and right cells by $l'$ and $r'$, respectively. As in the monadic case, these ranks may be lower than $l$ and $r$ when the corresponding argument has lower rank. The remaining shape prefixes form the frames. We require that one of the two frames be a prefix (potentially empty or complete) of the other. We refer to the common prefix of the two frames as the common frame, i.e., the shorter of the two. The cells of the argument with the shorter frame are logically replicated so that the frames are equal, and in particular, so that the number of $l'$-cells equals the number of $r'$-cells. Next, $u$ is applied to corresponding $l'$-cells and $r'$-cells as its left and right arguments, respectively. The result cells are collected into the longer frame, as illustrated in \cref{fig:dyadic}. If the result cells have different shapes, they are brought to the same shape using the fill value, as in the monadic case.
\begin{figure}
\begin{center}
\begin{tikzpicture}[
    box/.style={draw,minimum width=12mm,minimum height=6mm}
]
\coordinate (topcenter) at (0,0);
\coordinate (bottomcenter) at ($(topcenter)+(0,-29mm)$);


\node[box, left=3mm of topcenter,anchor=east,minimum width=11mm] (L1) {};
\node[box,pattern=north east lines,pattern color=black,left=0mm of L1.west,anchor=east,minimum width=15mm] (L3) {};

\node[box,pattern=north east lines,pattern color=black,right=3mm of topcenter.east,anchor=west,minimum width=15mm] (R1) {};
\node[box,pattern=north east lines,pattern color=cyan!70!black,right=0mm of R1.east,anchor=west] (R2) {};
\node[box,right=0mm of R2.east,anchor=west,minimum width=8mm] (R3) {};

\draw[decorate,decoration={brace,amplitude=4pt}]
  ($(L3.north west)+(0,2pt)$) -- ($(L3.north east)+(0,2pt)$)
  node[midway,above=3pt,align=center]
    {\footnotesize left \\[-3pt] \footnotesize frame};

\draw[decorate,decoration={brace,amplitude=4pt}]
  ($(L1.north west)+(0,2pt)$) -- ($(L1.north east)+(0,2pt)$)
  node[midway,above=3pt,align=center]
    {\footnotesize shape of \\[-3pt] \footnotesize $l'$-cells};

\draw[decorate,decoration={brace,mirror,amplitude=4pt}]
  ($(L3.south west)+(0,-2pt)$) -- ($(L3.south east)+(0,-2pt)$)
  node[midway,below=3pt]
    {\footnotesize common frame};

\draw[decorate,decoration={brace,mirror,amplitude=4pt}]
  ($(R1.south west)+(0,-2pt)$) -- ($(R1.south east)+(0,-2pt)$)
  node[midway,below=3pt]
    {\footnotesize common frame};

\draw[decorate,decoration={brace,amplitude=4pt}]
  ($(R1.north west)+(0,2pt)$) -- ($(R2.north east)+(0,2pt)$)
  node[midway,above=3pt,align=center]
    {\footnotesize right \\[-3pt] \footnotesize frame};

\draw[decorate,decoration={brace,amplitude=4pt}]
  ($(R3.north west)+(0,2pt)$) -- ($(R3.north east)+(0,2pt)$)
  node[midway,above=3pt,align=center]
    {\footnotesize shape of \\[-3pt] \footnotesize $r'$-cells};

\node[box,left=3mm of bottomcenter,anchor=east,minimum width=11mm] (bL1) {};
\node[box,pattern=north east lines,pattern color=cyan!70!black,
      left=0mm of bL1.west,anchor=east] (bL2) {};
\node[box,pattern=north east lines,pattern color=black,
      left=0mm of bL2.west,anchor=east,minimum width=15mm] (bL3) {};

\node[box,pattern=north east lines,pattern color=black,
      right=3mm of bottomcenter.east,anchor=west,minimum width=15mm] (bR1) {};
\node[box,pattern=north east lines,pattern color=cyan!70!black,
      right=0mm of bR1.east,anchor=west] (bR2) {};
\node[box,right=0mm of bR2.east,anchor=west,minimum width=8mm] (bR3) {};

\node[box,pattern=north east lines,pattern color=black,
      right=9mm of bR3.east,anchor=west,minimum width=15mm] (bW1) {};
\node[box,pattern=north east lines,pattern color=cyan!70!black,
      right=0mm of bW1.east,anchor=west] (bW2) {};
\node[box,right=0mm of bW2.east,anchor=west,minimum width=14mm] (bW3) {};

\coordinate (leftargbottomcenter) at
  ($(bL3.west)!0.5!(bL1.east)$);

\coordinate (leftargcenter) at
  (leftargbottomcenter |- topcenter);

\coordinate (rightargcenter) at
  ($(R1.west)!0.5!(R3.east)$);

\node[above=48pt of leftargcenter]
  {\normalsize Left argument's shape};

\node[above=48pt of rightargcenter]
  {\normalsize Right argument's shape};

\coordinate (resultbottomcenter) at
  ($(bW1.west)!0.5!(bW3.east)$);

\coordinate (resultcenter) at
  (resultbottomcenter |- topcenter);

\node[above=48pt of resultcenter]
  {\normalsize Result's shape};

\draw[decorate,decoration={brace,amplitude=4pt}]
  ($(bL1.north west)+(0,2pt)$) -- ($(bL1.north east)+(0,2pt)$)
  node[midway,above=3pt,align=center]
    {\footnotesize shape of \\[-3pt] \footnotesize $l'$-cells};

\draw[decorate,decoration={brace,amplitude=4pt}]
  ($(bL3.north west)+(0,2pt)$) -- ($(bL2.north east)+(0,2pt)$)
  node[midway,above=3pt,align=center]
    {\footnotesize longer \\[-3pt] \footnotesize frame};

\draw[decorate,decoration={brace,amplitude=4pt}]
  ($(bR3.north west)+(0,2pt)$) -- ($(bR3.north east)+(0,2pt)$)
  node[midway,above=3pt,align=center]
    {\footnotesize shape of \\[-3pt] \footnotesize $r'$-cells};

\draw[decorate,decoration={brace,amplitude=4pt}]
  ($(bR1.north west)+(0,2pt)$) -- ($(bR2.north east)+(0,2pt)$)
  node[midway,above=3pt,align=center]
    {\footnotesize longer \\[-3pt] \footnotesize frame};

\draw[decorate,decoration={brace,amplitude=4pt}]
  ($(bW3.north west)+(0,2pt)$) -- ($(bW3.north east)+(0,2pt)$)
  node[midway,above=3pt,align=center]
    {\footnotesize shape of \\[-3pt] \footnotesize result cells};

\draw[decorate,decoration={brace,amplitude=4pt}]
  ($(bW1.north west)+(0,2pt)$) -- ($(bW2.north east)+(0,2pt)$)
  node[midway,above=3pt,align=center]
    {\footnotesize longer \\[-3pt] \footnotesize frame};

\draw[-{Latex[length=2mm,width=2mm]},line width=1pt]
  ($(topcenter)+(0,-10.5mm)$) -- ($(bottomcenter)+(0,14.5mm)$);
\end{tikzpicture}
\end{center}
\caption{Schematic partitions of the arguments' shapes and the resulting array's shape. If one argument's frame is a prefix of the other, the longer frame is a prefix of the result's shape.}
\label{fig:dyadic}
\end{figure}

For example, the verb \jv{+} has dyadic rank \jn{0 0}, so the left and right frames form the entire shapes of the left and right arguments, respectively.
\[
\jn{\textit{2}}\jv{+}\jn{1 4 8}\jequiv\jn{3 6 10}
\]
In the above example left argument is an atom, so it has an empty frame, and the right argument has a frame \jn{3}. Hence, after agreement, the frame of the result is also \jn{3}. The result is constructed by adding the logically replicated atom from the left argument to each element of the right argument.

The following example shows agreement of the frames \jn{2 3} and \jn{2}, where the right argument is logically extended to \jnarray{1x1}{2}{3}{\jnlg}{0 0 0 10 10 10}.
\[
\jnarray{1x1}{2}{3}{\jnsm}{0 1 2 3 4 5}\jv{+}\jn{0 10} \jequiv \jnarray{1x1}{2}{3}{\jnlg}{0 1 2 13 14 15}
\]
\subsubsection{Creating verbs with custom rank}
Each verb has a rank that determines how its arguments are viewed in terms of a frame and cells on which the verb operates. J provides the rank conjunction, denoted by \jc", to explicitly specify the rank according to which the frame and cells are determined. Given a verb \jv{u} and rank value(s) \jn{n}, the expression \jv[r]{\jv{u}\jc"\jn{n}} creates a new verb of rank \jn{n} that applies \jv{u}, with its own rank, to the cells determined by \jn{n}, following the monadic and dyadic rank rules.

For example, applying \jc{"} to the verb \jv[r]{\jv+\ja{/}}, described in the subsection on monadic rank, generalizes this operation for summation along any axis of the argument. The calculation of the sum of each row, by inserting \jv+ between the items of each 1-cell, is performed by
\[
\jv[r]{\jv[r]{\jv+\ja/}\jc"\jn1}\jnarray{1x2}{2}{3}{\jnlg}{0 1 2 3 4 5}{6 7 8 9 10 11} \jequiv \jnarray{1x1}{2}{2}{\jnlg}{3 12 21 30},
\]
while the sum of each column of each 2-cell is calculated by
\[
\jv[r]{\jv[r]{\jv+\ja/}\jc"\jn2}\jnarray{1x2}{2}{3}{\jnlg}{0 1 2 3 4 5}{6 7 8 9 10 11} \jequiv \jnarray{1x1}{2}{3}{\jnlg}{3 5 7 15 17 19}.
\]
Whereas creating a rank-0 verb will result in this case in an identity function.
\[
\jv[r]{\jv[r]{\jv+\ja/}\jc"\jn0}\jnarray{1x2}{2}{3}{\jnlg}{0 1 2 3 4 5}{6 7 8 9 10 11} \jequiv \jnarray{1x2}{2}{3}{\jnlg}{0 1 2 3 4 5}{6 7 8 9 10 11}
\]
The rank conjunction also generalizes dyadic applications of verbs. The following example illustrates the application of \jv[r]{\jv{+}\jc"\jn{1 2}} to arrays of shapes \jn2 and \jn{2 2 3}, which causes an empty frame to agree with the frame \jn2. The \jv+ verb is applied to a logically replicated left argument and to each 2-cell of the right argument. Each application of \jv+ results in a further frame agreement between frames \jn2 and \jn{2 3}.
\begin{center}
\jn{0 10}\jv[r]{\jv{+}\jc"\jn{1 2}}\jnarray{1x2}{2}{3}{\jnlg}{0 1 2 3 4 5}{6 7 8 9 10 11} $\jequiv$
\jn{\jn{0 10}\jv+\jnarray{1x1}{2}{3}{\jnsm}{0 1 2 3 4 5}}\jn{\jn{0 10}\jv+\jnarray{1x1}{2}{3}{\jnlg}{6 7 8 9 10 11}} $\jequiv$ \jnarray{1x2}{2}{3}{\jnlg}{0 1 2 13 14 15}{6 7 8 19 20 21}
\end{center}
Adding each element of the left argument separately to the entire right argument is written as
\[
\jn{0 10}\jv[r]{\jv{+}\jc"\jn{0 \_}}\jnarray{1x2}{2}{3}{\jnlg}{0 1 2 3 4 5}{6 7 8 9 10 11} \jequiv \jnarray{2x2}{2}{3}{\jnlg}{0 1 2 3 4 5}{6 7 8 9 10 11}{10 11 12 13 14 15}{16 17 18 19 20 21}.
\]
The above expressions produce different results from applying the verb directly, without \jc":
\[
\jn{0 10}\jv{+}\jnarray{1x2}{2}{3}{\jnlg}{0 1 2 3 4 5}{6 7 8 9 10 11} \jequiv \jnarray{1x2}{2}{3}{\jnlg}{0 1 2 3 4 5}{16 17 18 19 20 21}.
\]
Since the conjunction \jc" produces a new verb of a given rank, it can be applied multiple times.
\[
\jn{0 10}\jv[r]{\jv[r]{\jv{+}\jc"\jn{0 \_}}\jc"\jn{1 2}}\jnarray{1x2}{2}{3}{\jnlg}{0 1 2 3 4 5}{6 7 8 9 10 11} \jequiv \jnarray{2x2}{2}{3}{\jnlg}{0 1 2 3 4 5}{10 11 12 13 14 15}{6 7 8 9 10 11}{16 17 18 19 20 21}.
\]
\subsection{Broadcasting in NumPy universal functions}
In NumPy, universal functions (ufuncs) operate on array inputs according to an associated signature that specifies how the dimensions of their arguments and result are related. For example, the dot product implemented by \texttt{numpy.vecdot} has the signature
$$(n),(n) \rightarrow (),$$
corresponding to taking the dot product of two $n$-element rank-1 arrays and producing a scalar result. For element-wise ufuncs such as \texttt{numpy.add}, corresponding to the infix operator \texttt{+}, the signature is \texttt{None}, meaning that the operation maps scalar inputs to a scalar output. The signature may contain dimension labels marked with a question mark (?) to denote optional dimensions. For example, the signature of \texttt{numpy.matmul} is $$(n?,k),(k,m?)\rightarrow(n?,m?),$$ indicating that the operation can represent matrix–matrix, matrix–vector, vector–matrix multiplication, or a dot product.

The dimensions explicitly specified by the signature define the shape of each cell. When a ufunc is applied to arrays with more dimensions than those specified by its signature, the preceding dimensions of each argument form the frame. This is analogous to J, where the signature serves a role similar to verb rank. The key difference lies in how frame agreement is handled: ufuncs use broadcasting to make differing frames equal. The following definitions formally introduce broadcasting and its associated terminology.
 
\begin{definition}[Broadcast compatibility and broadcast dimensions]
Let $A$ and $B$ be arrays to which a ufunc is applied. Let
\[
(a_m,\ldots,a_2,a_1)
\quad\text{and}\quad
(b_n,\ldots,b_2,b_1)
\]
be the frames of $A$ and $B$, respectively, as determined by the
ufunc signature. If $m \ne n$, dimensions of length $1$ are prepended
to the shorter frame so that both frames have the same length $M=\max(m,n)$. Denote the resulting frames by
\[
(a_M,\ldots,a_2,a_1)
\quad\text{and}\quad
(b_M,\ldots,b_2,b_1).
\]
The arrays $A$ and $B$ are broadcast compatible with respect to the ufunc if and only if, for every
$i\in\{1,\ldots,M\}$,
\[
a_i=b_i
\quad\text{or}\quad
a_i=1
\quad\text{or}\quad
b_i=1.
\]
For $i\in\{1,\ldots,M\}$, $a_i$ is called a broadcast dimension in $A$ with respect to $B$ if
\[
a_i=1 \quad\text{and}\quad b_i\ne 1.
\]
Similarly, $b_i$ is called a broadcast dimension in $B$ with respect to $A$ if
\[
b_i=1 \quad\text{and}\quad a_i\ne 1.
\]
\end{definition}
The above definitions are illustrated by the example shown in \cref{fig:coreloop}.
\begin{figure}[ht]
\centering
\begin{tikzpicture}[
   box/.style={
       draw,
       minimum width=7mm,
       minimum height=6mm,
       inner sep=1pt,
       align=center
   },
   onebox/.style={
       box,
       fill=yellow!25
   }
]

\usetikzlibrary{decorations.pathreplacing}

\def\vgap{3mm}

\node[box,    anchor=south west] (T1) at (-17.5mm,\vgap) {9};
\node[box,    anchor=west]       (T2) at (T1.east)       {1};
\node[box,    anchor=west]       (T3) at (T2.east)       {2};
\node[onebox, anchor=west]       (T4) at (T3.east)       {1};
\node[box,    anchor=west]       (T5) at (T4.east)       {7};
\node[box,    anchor=west]       (T6) at (T5.east)       {4};
\node[box,    anchor=west]       (T7) at (T6.east)       {5};

\node[onebox, anchor=north west] (B1) at (-17.5mm,-\vgap) {1};
\node[box, anchor=west]       (B2) at (B1.east)           {1};
\node[box,    anchor=west]       (B3) at (B2.east)        {2};
\node[box,    anchor=west]       (B4) at (B3.east)        {3};
\node[onebox, anchor=west]       (B5) at (B4.east)        {1};
\node[box,    anchor=west]       (B6) at (B5.east)        {5};

\draw[decorate,decoration={brace,amplitude=4pt}]
  ($(T6.north west)+(0,2pt)$) -- ($(T7.north east)+(0,2pt)$)
  node[midway,above=5pt,align=center]{\footnotesize shape of\\[-3pt]\footnotesize cells};

\draw[decorate,decoration={brace,amplitude=4pt}]
  ($(T1.north west)+(0,2pt)$) -- ($(T5.north east)+(0,2pt)$)
  node[midway,above=5pt,align=center]{\footnotesize frame};

\draw[decorate,decoration={brace,mirror,amplitude=4pt}]
  ($(B6.south west)+(0,-2pt)$) -- ($(B6.south east)+(0,-2pt)$)
  node[midway,below=5pt,align=center] {\footnotesize shape of\\[-3pt]\footnotesize cells};

\draw[decorate,decoration={brace,mirror,amplitude=4pt}]
  ($(B1.south west)+(0,-2pt)$) -- ($(B5.south east)+(0,-2pt)$)
  node[midway,below=5pt,align=center]{\footnotesize frame};

\node[anchor=east] at (T1.west) {$\text{\footnotesize Left shape}$};
\node[anchor=east] at (B1.west) {$\text{\footnotesize Right shape}$};
\end{tikzpicture}
\caption{Schematic partition of the shapes of example broadcast compatible arrays with respect to the ufunc with signature $(m,n),(n)\rightarrow(m)$. Broadcast dimensions are highlighted in yellow. Note that not all dimensions of length 1 are broadcast dimensions.}
\label{fig:coreloop}
\end{figure}
\begin{definition}[Broadcasting]
Let $A$ and $B$ be arrays that are broadcast compatible with respect to a ufunc. Each broadcast dimension is expanded to match the length of the corresponding non-broadcast dimension in the other array's frame. Values along the expanded dimensions are logically replicated. Consequently, the two arrays have the same frame.
\end{definition}

After broadcasting, the ufunc is executed independently on corresponding pairs of cells. The signature guarantees that all result cells have the same shape. 

Listing~\ref{lst:numpybroadcasting} presents examples of broadcasting in NumPy that correspond to some of the examples in J from the \cref{subsec:verb-rank}.
\begin{lstlisting}[mathescape,caption={Examples of broadcasting in the Python console.},captionpos=t,label={lst:numpybroadcasting}] 
>>> import numpy as np
>>> 2 + np.array([1,4,8])
array([ 3,  6, 10])
>>> np.arange(6).reshape(2,3) + np.array([0,10]).reshape(2,1)
array([[ 0,  1,  2],
       [13, 14, 15]])
>>> np.array([0,10]).reshape(2,1) + np.arange(12).reshape(2,2,3)
array([[[ 0,  1,  2],
        [13, 14, 15]],

       [[ 6,  7,  8],
        [19, 20, 21]]])
\end{lstlisting}
\section{Results}
NumPy’s broadcasting can be expressed in J as an adverb \ja{Broadcastly} such that \jn{x}\jv[r]{\jv{u}\ja{Broadcastly}}\jn{y} returns the same result as \texttt{u(x, y)} in NumPy, where
\begin{itemize}
\item \jn{x} and \jn{y} are J arrays having the same shapes and values as \texttt{x} and \texttt{y} in NumPy, 
\item \jv{u} is a verb with dyadic rank $l, r$, where $l$ and $r$ are the numbers of specified dimensions of the first and second arguments, respectively, in the signature of the ufunc \texttt{u}. For example, when the signature is \texttt{None}, the verb must have dyadic rank $\jn{0 0}$, whereas for the signature $(n),(n,m?)\rightarrow(m?)$, the corresponding rank is $\jn{1 2}$. For any left and right arguments of ranks $l' \leq l$ and $r' \leq r$, respectively, for which the ufunc \texttt{u} is defined, \jv{u} returns the same result as \texttt{u}.
\end{itemize}
 
The requirement that \jv{u} return the same result as the ufunc \texttt{u} whenever the ufunc is defined for the given cells follows from the minimum argument ranks specified by the ufunc signature. In a ufunc signature, the minimum rank of each argument is equal to the number of dimension labels not marked with question mark ($?$) in the corresponding signature part. J's dyadic rank does not impose such a minimum. As a result, a ufunc may raise an error when an operand does not have enough dimensions to satisfy the signature, whereas the corresponding J verb still produces a result. This difference can occur only for ufuncs that are not element-wise. \Cref{tab:operators} illustrates the correspondence between selected NumPy ufuncs and J verbs under these conditions.
\begin{table}[ht]
 \caption{Examples of the correspondence between ufuncs in NumPy and verbs in J to which the \ja{Broadcastly} is applied. For consistency, we represent a \texttt{None} signature as $(),()\rightarrow()$.}
\centering
\begin{tabular}{lc@{\hspace{2em}}cc}
\multicolumn{2}{c}{NumPy} & \multicolumn{2}{c}{J} \\
ufunc & signature & verb & dyadic rank \\
\hline
\texttt{multiply} & $(),()\rightarrow()$ & \jv* & \jn{0 0} \\
\texttt{vecdot} & $(n),(n)\rightarrow()$ & \jv{+}\ja/\jc{@:}\jv*\jc"\jn1 & \jn{1 1} \\
\texttt{matmul} & $(n?,k),(k,m?)\rightarrow(n?,m?)$ & \jv+\ja/\jc{\ .}\jv*\jc"\jn2 & \jn{2 2} \\
\end{tabular}
\label{tab:operators}
\end{table}
\subsection{Algorithm}
In NumPy, the broadcast dimensions present in array shapes determine how an operator applied to those arrays behaves. In J, however, there is no notion of a broadcast dimension, since dimensions of length 1 have no special semantic significance and do not determine the behavior of operators. In J, the corresponding behavior of an operator (verb) is determined by its rank. This means that \jv[r]{\jv{u}\ja{Broadcastly}} will determine, based on the shapes of its arguments, which dimensions correspond to NumPy's broadcast dimensions. Based on these dimensions, it will construct a verb of the appropriate rank and apply it to the arguments with the broadcast dimensions removed. The construction of the verb will consist in applying a sequence of rank conjunctions to \jv{u}.
  
We first consider a special case when ufuncs are element-wise, which provides the basis for the general case presented subsequently.

\subsubsection{Construction of a verb corresponding to an element-wise universal function}
\label{subsec:construction} 
Consider an element-wise ufunc in NumPy applied to two broadcast compatible arrays. In this case, the frames can be identified with the array shapes. The arrays may be assumed without loss of generality to have the same rank $n$, as dimensions of length~1 can be prepended to the shape of the lower-rank array.

We introduce an indexing scheme for the dimensions specified by array shapes, which will be used in the construction of the verb. In each array, the non-broadcast dimensions are indexed by consecutive natural numbers from right to left, starting at $1$, with the broadcast dimensions omitted from the numbering. We denote the non-broadcast dimensions by $\delta_i$ and $\widetilde{\delta}_j$ for the left and right arrays, respectively, where $i$ and $j$ are the given indices. Broadcast dimensions are denoted by~$\beta_i$ and~$\widetilde{\beta}_j$, respectively. Each broadcast dimension of the left array is assigned the largest index~$i$ such that $\delta_i$ occurs to its right, and each broadcast dimension of the right array is assigned the largest index~$j$ such that $\widetilde{\delta}_j$ occurs to its right. In either case, the value assigned is~$0$ if no such index exists. An example of this indexing scheme is shown in \cref{fig:indeksowanie}.

\begin{figure}[ht]
\centering
\begin{tikzpicture}[
   box/.style={
       draw,
       minimum width=7mm,
       minimum height=6mm,
       inner sep=1pt,
       align=center
   },
   onebox/.style={
       box,
       fill=yellow!25
   }
]
\tikzset{connectdotted/.style={densely dotted, line cap=round}}
\def\vgap{3mm}
\node[box,    anchor=south west] (T1) at (-17.5mm,\vgap) {$\delta_4$};
\node[box,    anchor=west]       (T2) at (T1.east)       {$\delta_3$};
\node[onebox, anchor=west]       (T3) at (T2.east)       {$\beta_2$};
\node[box,    anchor=west]       (T4) at (T3.east)       {$\delta_2$};
\node[box,    anchor=west]       (T5) at (T4.east)       {$\delta_1$};
 
\node[onebox, anchor=north west] (B1) at (-17.5mm,-\vgap) {$\widetilde{\beta}_2$};
\node[onebox, anchor=west]       (B2) at (B1.east)        {$\widetilde{\beta}_2$};
\node[box,    anchor=west]       (B3) at (B2.east)        {$\widetilde{\delta}_2$};
\node[box,    anchor=west]       (B4) at (B3.east)        {$\widetilde{\delta}_1$};
\node[onebox, anchor=west]       (B5) at (B4.east)        {$\widetilde{\beta}_0$};

\draw[connectdotted] (T1.south) -- (B1.north);
\draw[connectdotted] (T2.south) -- (B2.north);
\draw[connectdotted] (T3.south) -- (B3.north);
\draw[connectdotted] (T5.south) -- (B5.north);

\node[anchor=east] at (T1.west) {$\text{\footnotesize Left shape}$};
\node[anchor=east] at (B1.west) {$\text{\footnotesize Right shape}$};
\end{tikzpicture}
\caption{Indexing dimensions in example array shapes when arrays are broadcast together. Broadcast dimensions are highlighted in yellow. Dotted lines indicate the correspondence between the broadcast dimensions of one array and the non-broadcast dimensions of the other. For the arrays to be broadcast compatible, we must have $\delta_2 = \widetilde{\delta}_1$. Note that $\delta_2 = \widetilde{\delta}_1 = 1$ is possible.}
\label{fig:indeksowanie}
\end{figure}
The construction of a verb of the appropriate rank for a given \jv{u} and two broadcast compatible arrays of rank $n$ proceeds by simultaneously traversing their indexed shapes from right to left. Thus, at step $1\leq k\leq n$, the $k$-th dimensions from the right in the shapes of both arrays are considered. Exactly one of the following cases applies:
\begin{itemize}
    \item $\delta_i$, $\widetilde{\delta}_j$ -- the dimension is non-broadcast in both arrays. Since the arrays are broadcast compatible, we have $\delta_i = \widetilde{\delta}_j$. In this case, we retain the verb constructed so far.
    
    \item $\delta_i$, $\widetilde{\beta}_j$ -- the right array has a broadcast dimension. We construct a new verb by applying the rank conjunction \jc{"} to the current verb and an array representing the dyadic rank $(i-1, j)$.
    
    \item $\beta_i$, $\widetilde{\delta}_j$ -- the left array has a broadcast dimension. We construct a new verb by applying the rank conjunction \jc{"} to the current verb and an array representing the dyadic rank $(i, j-1)$.
\end{itemize}
The construction is complete after all $n$ pairs of corresponding dimensions have been processed.

Following the algorithm described above, for the example in \cref{fig:indeksowanie}, we obtain the verb
\begin{equation}
\jv[r]{\jv[r]{\jv[r]{\jv[r]{\jv{u}\jc{"}\jn{0 0}}\jc{"}\jn{2 1}}\jc{"}\jn{2 2}}\jc{"}\jn{3 2}}.
\label{eq:unoptimized}
\end{equation}
We apply the verb constructed in this way to the arrays without broadcast dimensions, i.e., to the left array with shape $\delta_4\delta_3\delta_2\delta_1$ and the right array with shape $\widetilde{\delta}_2\widetilde{\delta}_1$. The following theorem establishes that applying the verb constructed by the algorithm to arrays with the broadcast dimensions removed produces a result with the same shape as that returned by the element-wise ufunc when applied to arrays with their original shapes.
\begin{theorem}\label{thm:elementwise}
Let $n$ be the rank of the left and right arrays. The arrays are assumed to be broadcast compatible with respect to an element-wise ufunc. The following invariant holds for every $0\leq k\leq n$, where $k=0$ denotes the initial state before the first step. Let $s_k$ and $\widetilde{s}_k$ denote the suffixes of length $k$ of the shapes of the left and right arrays, respectively.  For $k\geq1$, let $i$ and $j$ be the indices assigned to the $k$-th dimensions from the right in the shapes of the left and right arrays, respectively; for $k=0$, set $i=j=0$. Let $\delta_i\delta_{i-1}\ldots\delta_1$ and $\widetilde{\delta}_j\widetilde{\delta}_{j-1}\ldots\widetilde{\delta}_1$ denote the shapes obtained from $s_k$ and $\widetilde{s}_k$, respectively, after removing the broadcast dimensions. The invariant states that the following operations produce $k$-dimensional arrays of the same shape:
\begin{itemize}
    \item an element-wise ufunc applied to arrays with shapes $s_k$ and $\widetilde{s}_k$
    \item application of the currently constructed verb to arrays with shapes $\delta_i\delta_{i-1}\ldots \delta_1$ and $\widetilde{\delta}_j\widetilde{\delta}_{j-1}\ldots \widetilde{\delta}_1$.
\end{itemize}
\end{theorem}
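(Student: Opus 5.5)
Induction on $k$. I will track, alongside the shape statement, the frame and cell ranks of the current verb. Let $v_k$ denote the verb after step $k$, with $v_0=$\jv{u}.

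\emph{Auxiliary claim.} For $k\ge1$, $v_k$ has dyadic rank $(i,j)$ whenever the $k$-th step applied \jc{"}, where $i,j$ are the indices attached to the $k$-th dimension. In every case, when $v_k$ is applied to arguments of shapes $\delta_i\ldots\delta_1$ and $\widetilde{\delta}_j\ldots\widetilde{\delta}_1$, those arguments have rank not exceeding the effective rank of $v_k$. Consequently each argument is treated as a single cell with empty frame at the outermost level of $v_k$. This holds because a ranked verb applied to arguments of rank at most its rank simply applies the inner verb to the whole arguments. So the claim lets me peel off one layer at a time.

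\emph{Base case $k=0$.} Both arguments are scalars with empty shapes. The ufunc returns a scalar, and \jv{u} has rank \jn{0 0} and returns a scalar, since it is assumed to agree with the ufunc on cells of rank $\le$ the signature ranks.

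\emph{Inductive step $k-1\to k$.} I split into the three cases of the construction.

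\textbf{Case $\delta_i,\widetilde{\delta}_j$.} At step $k-1$ the indices were $i-1$ and $j-1$, and the verb is unchanged. The ufunc result on $s_k,\widetilde s_k$ has shape $\delta_i$ prepended to the shape for $s_{k-1},\widetilde s_{k-1}$. On the J side, the arguments gain leading axes of equal length $\delta_i=\widetilde\delta_j$. I must argue that $v_{k-1}$, applied to these arguments, iterates over this new leading axis. The argument is that the outermost rank of $v_{k-1}$ is $(i',j')$, with $i'\le i-1$ and $j'\le j-1$, coming from the last \jc{"} applied. Hence the frames are $\delta_i\ldots\delta_{i'+1}$ and $\widetilde\delta_j\ldots\widetilde\delta_{j'+1}$. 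These frames agree up to the shorter one, because of broadcast compatibility and the structure of the indexing: they share the same non-broadcast columns below them. The result is thus $\delta_i$ times the previous result shape.

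This case is the \emph{main obstacle}. The prefix condition for J frame agreement must be verified by relating the column positions of non-broadcast dimensions in both arrays. To make this clean, I would strengthen the invariant: applying $v_k$ to arguments with any additional leading non-broadcast axes of equal lengths yields those axes prepended to the result. That strengthened form is exactly what the \jn{0 0}-based agreement rule guarantees, once I check the prefix property.

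\textbf{Case $\delta_i,\widetilde{\beta}_j$.} The new verb is $v_{k-1}$\jc{"}$(i-1,j)$. On the arguments $\delta_i\ldots\delta_1$ and $\widetilde\delta_j\ldots\widetilde\delta_1$, the left frame is $\delta_i$ and the right frame is empty. The right argument is replicated, and $v_{k-1}$ is applied to each left $(i-1)$-cell paired with the whole right argument. By the inductive hypothesis each such application has the shape of the ufunc on $s_{k-1},\widetilde s_{k-1}$; note the right index at step $k-1$ is still $j$, since a broadcast dimension does not change the index. Thus the overall shape is $\delta_i$ prepended to that shape. This matches the ufunc, whose $k$-th result dimension is $\max(\delta_i,1)=\delta_i$.

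\textbf{Case $\beta_i,\widetilde{\delta}_j$.} This case is symmetric to the previous one.

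Finally, since all result cells have equal shape, no fill is involved, and the shapes coincide.
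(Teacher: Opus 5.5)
Your proof is the paper's proof: induction on $k$, the same base case, and the same three cases. Your Case~2 is the paper's frame computation. In Case~1 you use the same key fact, that the outermost rank of the current verb is at most $(i-1,j-1)$, so the equal lengths $\delta_i=\widetilde{\delta}_j$ head both frames. You also flag the prefix requirement for J frame agreement, which the paper leaves implicit.

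\textbf{The ``auxiliary claim'' is false and should be deleted.}
\begin{itemize}
\item When step $k$ applies the rank conjunction, the new rank is $(i-1,j)$ or $(i,j-1)$, which is the index pair of step $k-1$. It is not $(i,j)$.
\item The step-$k$ arguments are not single cells of $v_k$. In the case $\delta_i,\widetilde{\beta}_j$ the left frame is $\delta_i$, exactly as your own Case~2 computes. In the paper's example, step~5 has indices $(4,2)$ but rank $(3,2)$, and the rank-4 left argument has frame $\delta_4$.
\item If the claim held, $v_k$ would act on the step-$k$ arguments exactly as $v_{k-1}$ does, so the rank conjunction would have no effect.
\item The true statement is shifted by one step: $v_k$ acts on the step-$(k-1)$ shapes as $v_{k-1}$ does. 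None of your three cases actually invokes the claim, so removing it leaves a correct proof.
\end{itemize}

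\textbf{The strengthened invariant you propose for Case~1 is unnecessary.} By the induction hypothesis, applying $v_{k-1}$ to shapes $\delta_{i-1}\ldots\delta_1$ and $\widetilde{\delta}_{j-1}\ldots\widetilde{\delta}_1$ succeeds, so those two frames already agree. Prepending the common length $\delta_i=\widetilde{\delta}_j$ to both preserves the prefix relation. The cell ranks do not change, because the outer rank is at most $(i-1,j-1)$.

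\textbf{One small slip in Case~2.} The ufunc's result length there is $\delta_i$ by the broadcasting rule. Writing it as $\max(\delta_i,1)$ is wrong when $\delta_i=0$, since NumPy broadcasts $0$ against $1$ to $0$.
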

\begin{proof}
We prove the invariant by induction on the number of steps performed.
\newline\textbf{Base case ($k = 0$).}
The shapes $s_k$ and $\widetilde{s}_k$ are empty, so the element-wise ufunc returns a scalar. Before the first iteration, the current verb is \jv{u}. Since $i=j=0$, \jv{u} is applied to arrays with empty shapes (rank 0). By the assumption that \jv{u} produces the same result as the element-wise ufunc when applied to 0-cells, the result is a rank-0 array.
\newline\textbf{Inductive case ($1\leq k\leq n$).}
Let $v$ denote the current verb, and let $s_{result}$ denote the shape of the result from the previous iteration. We consider the three cases from the algorithm:
\begin{itemize}
    \item $\delta_i$, $\widetilde{\delta}_j$ -- since the array shapes are assumed to be broadcast compatible, we have $\delta_i=\widetilde{\delta}_j$. Thus, the shape of the result of the NumPy element-wise ufunc on arrays with shapes $s_k$ and $\widetilde{s}_k$ is $\delta_is_{result}$. By the induction hypothesis, applying $v$ to arrays with shapes $\delta_{i-1}\delta_{i-2}\ldots \delta_1$ and $\widetilde{\delta}_{j-1}\widetilde{\delta}_{j-2}\ldots \widetilde{\delta}_1$ produces an array with shape $s_{result}$. We want to show that applying $v$ to arrays with shapes $\delta_i\delta_{i-1}\ldots \delta_1$ and $\widetilde{\delta}_j\widetilde{\delta}_{j-1}\ldots\widetilde{\delta}_1$ produces an array with shape $\delta_is_{result}$. The verb $v$ has dyadic rank $(l,r)$ with $l<i$ and $r<j$, so $\delta_i$ and $\widetilde{\delta_j}$ are the leftmost elements of the left and right frames, respectively. Since $\delta_i=\widetilde{\delta}_j$, $\delta_i$ is the leftmost element of the common frame, and the common frame is always a prefix of the result shape.
    
    \item $\delta_i$, $\widetilde{\beta}_j$ -- the right array has a broadcast dimension, so the shape of the result of the element-wise ufunc on arrays with shapes $s_k$ and $\widetilde{s}_k$ is $\delta_is_{result}$. By the induction hypothesis, applying $v$ to arrays with shapes $\delta_{i-1}\delta_{i-2}\ldots \delta_1$ and $\widetilde{\delta}_{j}\widetilde{\delta}_{j-1}\ldots\widetilde{\delta}_1$ produces an array with shape $s_{result}$. Let $v'$ be the new verb obtained by applying the rank conjunction to $v$ and the array representing the dyadic rank $(i-1,j)$. We want to show that applying $v'$ to arrays with shapes $\delta_i\delta_{i-1}\ldots \delta_1$ and $\widetilde{\delta}_j\widetilde{\delta}_{j-1}\ldots\widetilde{\delta}_1$ produces an array with shape $\delta_is_{result}$. The verb $v'$ determines the frame of the left array relative to $(i-1)$-cells, so the left frame is $\delta_i$. It determines the frame of the right array relative to $j$-cells, so the right frame is empty. As a result of frame agreement, $\delta_i$ becomes the frame of the result, and $v$ is applied to the corresponding $(i-1)$-cells and $j$-cells of the left and right arrays, respectively. By the induction hypothesis, the shape of the resulting cells is $s_{result}$.
    
    \item $\beta_i$, $\widetilde{\delta}_j$ -- the case is analogous to the previous one.
\end{itemize}
By induction, the invariant holds before the first step and after each subsequent step of the algorithm.
\end{proof} 
\subsubsection{Minimizing the number of rank conjunctions}
Let's examine when an application of the rank conjunction can be omitted from the algorithm above in order to reduce the number of compositions. Let a block of broadcast dimensions be a contiguous subsequence of the shape of an array, where each dimension in the subsequence is a broadcast dimension. A block that cannot be extended is called maximal. Suppose that a maximal block occurs in one of the shapes. By symmetry, assume that it occurs in the shape of the right argument as $\widetilde{\beta}_k\widetilde{\beta}_k...\widetilde{\beta}_k$ for some $k$, while the corresponding positions in the shape of the left argument contain the dimensions $\delta_j\delta_{j-1}...\delta_{i+1}\delta_i$. According to the algorithm, this block corresponds to a sequence of compositions with rank conjunctions for the dyadic ranks $(i-1,k)$, $(i,k)$, $\ldots$, $(j-2,k)$, $(j-1,k)$, respectively. The repeated frame agreements resulting from these ranks are equivalent to determining the frame of the left array relative to $(i-1)$-cells and the frame of the right array relative to $k$-cells. Thus, for a maximal block, a single application of the rank conjunction suffices, with dyadic rank $(i-1,k)$. This means that, in general, it is sufficient to apply only those rank conjunctions that, in the algorithm, correspond to the rightmost broadcast dimensions of each maximal block. The number of compositions is then equal to the number of maximal blocks of broadcast dimensions in both arrays. Additionally, if the innermost composition (the first composition computed by the algorithm) with a rank conjunction concerns the ranks $\jn{0 0}$, it can also be omitted, since verbs in J corresponding to element-wise ufuncs in NumPy have rank \jn{0 0}. Taking these two optimizations into account, the verb~\eqref{eq:unoptimized} can be simplified to the form~\eqref{eq:optimized}.
\begin{equation}
\jv[r]{\jv[r]{\jv{u}\jc{"}\jn{2 1}}\jc{"}\jn{2 2}}
\label{eq:optimized}
\end{equation}
\subsubsection{Generalizing to all universal functions}\label{subsec:generalization} 
The result of theorem~\ref{thm:elementwise} can be extended from element-wise ufuncs to ufuncs with general signatures. We show that the sequence of ranks constructed by the algorithm can be adapted to higher-rank cells when the algorithm is applied to frames.
\begin{lemma}\label{lem:increasedranks}
Let $v$ be a verb of dyadic rank $(0, 0)$, and let $V$ be a verb of dyadic rank $(l,r)$. Let $l'$ and $r'$ satisfy $0\leq l'\leq l$ and $0\leq r'\leq r$. Let $\delta_i$, $\widetilde{\delta}_i$, $\gamma_i$, and $\widetilde{\gamma}_i$ be arbitrary dimension lengths. Assume that every valid application of $v$ to a pair of 0-cells as the left and right arguments, respectively, produces a 0-cell, and that every valid application of $V$ to an $l'$-cell of shape $\gamma_{l'}\ldots\gamma_1$ and an $r'$-cell of shape $\widetilde{\gamma}_{r'}\ldots\widetilde{\gamma}_1$ as the left and right arguments, respectively, produces an array of shape $\Gamma$. Let $(l_i)_{i\geq 1}$ and $(r_i)_{i\geq 1}$ be sequences satisfying
\[
l'\leq l_1\leq l_2\leq\cdots
\qquad\text{and}\qquad
r'\leq r_1\leq r_2\leq\cdots.
\]

Then, for every $k\geq 0$, there exists a frame $\Delta_k$ such that the following implication holds: if applying
$$ 
v\texttt{"}(l_1,r_1)\texttt{"}(l_2, r_2)\texttt{"}\ldots\texttt{"}(l_k,r_k)
$$
to arrays of shapes
$$
\delta_{l_{k+1}}\ldots\delta_1
\quad\text{and}\quad
\widetilde{\delta}_{r_{k+1}}\ldots\widetilde{\delta}_1
$$
as the left and right arguments, respectively, produces an array of shape $\Delta_k$, then applying
$$
V\texttt{"}(l',r')\texttt{"}(l_1+l',r_1+r')\texttt{"}\ldots\texttt{"}(l_k+l',r_k+r')
$$
to arrays of shapes
$$
\delta_{l_{k+1}}\ldots\delta_1\gamma_{l'}\ldots\gamma_1 \quad\text{and}\quad
\widetilde{\delta}_{r_{k+1}}\ldots\widetilde{\delta}_1\widetilde{\gamma}_{r'}\ldots\widetilde{\gamma}_1
$$
as the left and right arguments, respectively, produces an array of shape \(\Delta_k\Gamma\).
\end{lemma}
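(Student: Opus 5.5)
We argue by induction on $k$, strengthening the claim to a statement about frames. The $v$-side and the $V$-side are run through the same frame-agreement process, with the $V$-side carrying an extra cell suffix $\gamma_{l'}\ldots\gamma_1$ (resp.\ $\widetilde{\gamma}_{r'}\ldots\widetilde{\gamma}_1$) that never becomes part of any frame. The key observation is that applying $w\texttt{"}(a,b)$ to arrays of ranks $p$ and $q$ determines frames of lengths $\max(p-a,0)$ and $\max(q-b,0)$. Appending $l'$ (resp.\ $r'$) trailing axes while shifting the rank to $(a+l',b+r')$ leaves these frame lengths unchanged. So the two frames, and hence their agreement and the longer frame, are identical on both sides. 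Moreover, the cells passed to $w$ on the $V$-side are exactly the $v$-side cells with the $\gamma$-suffixes appended.

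\textbf{Base case ($k=0$).} Here $v$ (rank $(0,0)$) is applied to shapes $\delta_{l_1}\ldots\delta_1$ and $\widetilde{\delta}_{r_1}\ldots\widetilde{\delta}_1$. Its frames are the full shapes, which must agree (one a prefix of the other), and each application returns a $0$-cell. Hence $\Delta_0$ is the longer of the two shapes. On the $V$-side, $V\texttt{"}(l',r')$ is applied to the extended shapes. The frames are the same two shapes, since the cells are exactly the $\gamma$- and $\widetilde{\gamma}$-parts; this uses $l'\le l$, $r'\le r$, so $V$ acts on these cells with its own rank. By hypothesis each application yields shape $\Gamma$, so the result is $\Delta_0\Gamma$. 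Because all results share the shape $\Gamma$, no fill is needed.

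\textbf{Inductive step ($k-1\to k$).} Let $w=v\texttt{"}(l_1,r_1)\ldots\texttt{"}(l_{k-1},r_{k-1})$ and let $W$ be the corresponding shifted $V$-verb. The outermost conjunction $\texttt{"}(l_k,r_k)$, applied to arrays of ranks $l_{k+1}$ and $r_{k+1}$, produces frames $\delta_{l_{k+1}}\ldots\delta_{l_k+1}$ (empty if $l_{k+1}\le l_k$) and the analogous right frame. The cells are the $l_k$-cells $\delta_{l_k}\ldots\delta_1$; if $l_{k+1}<l_k$ the whole argument becomes a single cell. Monotonicity $l_k\le l_{k+1}$ guarantees that the cell shape is precisely $\delta_{l_k}\ldots\delta_1$. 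This is exactly the input shape the induction hypothesis for $k-1$ requires, so $w$ returns $\Delta_{k-1}$ on each pair of cells. Define $\Delta_k$ as the longer frame followed by $\Delta_{k-1}$. On the $V$-side, the rank $(l_k+l',r_k+r')$ applied to ranks $l_{k+1}+l'$ and $r_{k+1}+r'$ gives the same frames. Its cells are $\delta_{l_k}\ldots\delta_1\gamma_{l'}\ldots\gamma_1$ and the corresponding right cells, so by the induction hypothesis $W$ returns $\Delta_{k-1}\Gamma$ on each pair. Collecting these into the longer frame gives $\Delta_k\Gamma$.

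\textbf{Main obstacle.} The delicate point is the ``valid application'' and rank-truncation bookkeeping. When the rank $l_k$ exceeds the actual argument rank, J takes the whole argument as a cell, and we must check that this happens simultaneously on both sides. It does, since $l_{k+1}+l'\le l_k+l'$ if and only if $l_{k+1}\le l_k$. In the degenerate case where the left inequality is strict, the $v$-side argument has rank $l_{k+1}<l_k$ rather than $l_k$, so the induction hypothesis must be stated for arbitrary argument ranks. We therefore formulate the induction for ``any shapes on which the $v$-side is valid'', so that the prefix-agreement condition transfers verbatim between the two sides.
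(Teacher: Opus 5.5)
Your proof is correct and follows essentially the paper's argument. Both proceed by induction on $k$ and peel off the outermost rank conjunction, observe that the shifted rank $(l_k+l',r_k+r')$ yields the same two frames (hence the same longer frame) on the extended arrays, note that the cells passed inward are the $v$-side cells with the $\gamma$-suffixes appended, and then invoke the hypothesis for $k-1$ on those cells to get $\Delta_{k-1}\Gamma$ and hence $\Delta_k\Gamma$. One remark: your ``main obstacle'' is vacuous, because the hypothesis $l_k\le l_{k+1}$ (which you already use in the inductive step) rules out $l_{k+1}<l_k$, so no strengthening to arbitrary argument ranks is needed — although your frame-length observation $\max(p-a,0)$ shows that strengthening would also hold.
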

\begin{proof}
We prove the lemma by induction on the $k$.
\newline\textbf{Base case ($k = 0$).}
Suppose that applying $v$ to arrays of shapes
$$
\delta_{l_1}\ldots\delta_1
\quad\text{and}\quad
\widetilde{\delta}_{r_1}\ldots\widetilde{\delta}_1
$$
produces an array of shape $\Delta_0$. Since $v$ has dyadic rank $(0, 0)$ and produces a $0$-cell when applied to two $0$-cells, $\Delta_0$ is the longer of the two shapes, as frame agreement selects the longer frame.
Applying $V\texttt{"}(l',r')$ to arrays of shapes
$$\delta_{l_1}\ldots\delta_1\gamma_{l'}\ldots\gamma_1
\quad\text{and}\quad 
\widetilde{\delta}_{r_1}\ldots\widetilde{\delta}_1
\widetilde{\gamma}_{r'}\ldots\widetilde{\gamma}_1$$
first agrees the frames $\delta_{l_1}\ldots\delta_1$ and $\widetilde{\delta}_{r_1}\ldots\widetilde{\delta}_1$. Thus, the result has frame $\Delta_0$, and the corresponding $l'$-cells and $r'$-cells are then passed to $V$. By assumption, applying $V$ to such cells produces an array of shape $\Gamma$. Hence, the resulting array has shape $\Delta_0\Gamma$.
\newline\textbf{Inductive case ($k\geq 1$).}
Assume that applying
$$
v\texttt{"}(l_1,r_1)\texttt{"}\ldots\texttt{"}(l_k,r_k)
$$
to arrays of shapes
$$ 
\delta_{l_{k+1}}\ldots\delta_1
\quad\text{and}\quad
\widetilde{\delta}_{r_{k+1}}\ldots\widetilde{\delta}_1
$$
produces an array of shape $\Delta_k$. Their frames are
$$
\delta_{l_{k+1}}\ldots\delta_{l_k+1}
\quad\text{and}\quad
\widetilde{\delta}_{r_{k+1}}\ldots\widetilde{\delta}_{r_k+1}.
$$
Frame agreement selects the longer frame. By symmetry, it suffices to consider the case in which $\delta_{l_{k+1}}\ldots\delta_{l_k+1}$ is the longer frame. Then
$$
\Delta_k=\delta_{l_{k+1}}\ldots\delta_{l_k+1}\Delta_{k-1},
$$
where $\Delta_{k-1}$ is the shape of the result of applying
$$
v\texttt{"}(l_1,r_1)\texttt{"}\ldots\texttt{"}(l_{k-1},r_{k-1})
$$
to cells of shapes
$$
\delta_{l_k}\ldots\delta_1
\quad\text{and}\quad
\widetilde{\delta}_{r_k}\ldots\widetilde{\delta}_1.
$$
 
Thus, the antecedent of the induction hypothesis for $k-1$ is satisfied. Hence, applying
$$
V\texttt{"}(l',r')\texttt{"}(l_1+l',r_1+r')\texttt{"}\ldots\texttt{"}(l_{k-1}+l',r_{k-1}+r')
$$
to arrays of shapes
$$
\delta_{l_k}\ldots\delta_1\gamma_{l'}\ldots\gamma_1
\quad\text{and}\quad
\widetilde{\delta}_{r_k}\ldots\widetilde{\delta}_1
\widetilde{\gamma}_{r'}\ldots\widetilde{\gamma}_1
$$
produces an array of shape $\Delta_{k-1}\Gamma$.

Now consider applying
$$
V\texttt{"}(l',r')\texttt{"}(l_1+l',r_1+r')\texttt{"}\ldots\texttt{"}(l_k+l',r_k+r')
$$
to arrays of shapes
$$
\delta_{l_{k+1}}\ldots\delta_1\gamma_{l'}\ldots\gamma_1 \quad\text{and}\quad
\widetilde{\delta}_{r_{k+1}}\ldots\widetilde{\delta}_1\widetilde{\gamma}_{r'}\ldots\widetilde{\gamma}_1.
$$
This first agrees the frames
$$
\delta_{l_{k+1}}\ldots\delta_{l_k+1}
\quad\text{and}\quad
\widetilde{\delta}_{r_{k+1}}\ldots\widetilde{\delta}_{r_k+1},
$$
so the longer frame
$\delta_{l_{k+1}}\ldots\delta_{l_k+1}$ is selected. It then applies
$$
V\texttt{"}(l',r')\texttt{"}(l_1+l',r_1+r')\texttt{"}\ldots\texttt{"}(l_{k-1}+l',r_{k-1}+r')
$$
to the corresponding $(l_k+l')$-cells and $(r_k+r')$-cells. Each such pair of cells has shape
$$\delta_{l_k}\ldots\delta_1\gamma_{l'}\ldots\gamma_1\quad\text{and}\quad \widetilde{\delta}_{r_k}\ldots\widetilde{\delta}_1 \widetilde{\gamma}_{r'}\ldots\widetilde{\gamma}_1,$$
respectively, so the induction hypothesis applies and gives an array of shape $\Delta_{k-1}\Gamma$. Hence, the resulting shape is
$$
\delta_{l_{k+1}}\ldots\delta_{l_k+1}\Delta_{k-1}\Gamma
=\Delta_k\Gamma.
$$
\end{proof}
\begin{corollary}
Let $s$ and $\widetilde{s}$ be the frames of two broadcast compatible arrays. Let $\delta_m\ldots\delta_1$ and $\widetilde{\delta}_n\ldots\widetilde{\delta}_1$ be the frames obtained from $s$ and $\widetilde{s}$, respectively, by removing their broadcast dimensions. Let $v$ be a verb of dyadic rank $(0,0)$ that produces a $0$-cell when applied to a pair of $0$-cells. The construction from \cref{subsec:construction} applied to $v$ and arrays of shapes $s$ and $\widetilde{s}$ produces the verb
\[
v\texttt{"}(l_1,r_1)\texttt{"}\ldots\texttt{"}(l_N,r_N).
\]
Let $\Delta$ be the shape of the result obtained by applying this verb to arrays of shapes $\delta_m\ldots\delta_1$ and $\widetilde{\delta}_n\ldots\widetilde{\delta}_1$.

Let $U$ be a ufunc with a signature containing $l$ and $r$ dimension labels for its left and right arguments, respectively. Let $V$ be a verb of dyadic rank $(l,r)$ that produces the same result as $U$ on arguments of ranks at most $l$ and $r$, respectively, whenever $U$ is valid.

Suppose that $U$ is applied validly to arrays of shapes
\[
s\gamma_{l'}\ldots\gamma_1
\quad\text{and}\quad
\widetilde{s}\widetilde{\gamma}_{r'}\ldots\widetilde{\gamma}_1,
\]
where $s$ and $\widetilde{s}$ are the frames, which broadcast to $\Delta$, and
$l'\leq l$ and $r'\leq r$ are the ranks of the left and right cells, respectively. The cells have shapes $\gamma_{l'}\ldots\gamma_1$ and
$\widetilde{\gamma}_{r'}\ldots\widetilde{\gamma}_1$. As specified by its signature, $U$ maps these cells to a cell of shape $\Gamma$. Hence, $U$ produces an array of shape $\Delta\Gamma$.

Then applying the verb
\[
V\texttt{"}(l',r')\texttt{"}(l_1+l',r_1+r')\texttt{"}\ldots\texttt{"}(l_N+l',r_N+r')
\]
to arrays of shapes
\[
\delta_m\ldots\delta_1\gamma_{l'}\ldots\gamma_1
\quad\text{and}\quad
\widetilde{\delta}_n\ldots\widetilde{\delta}_1\widetilde{\gamma}_{r'}\ldots\widetilde{\gamma}_1
\] 
produces an array of the same shape $\Delta\Gamma$. 
\end{corollary}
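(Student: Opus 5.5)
The plan is to combine \cref{thm:elementwise} with \cref{lem:increasedranks}, taking $k=N$. First I will establish the element-wise part. Applying \cref{thm:elementwise} to arrays of shapes $s$ and $\widetilde{s}$ (prepending 1s so they have common length $M$), and taking $k=M$, gives the following. The constructed verb $v\texttt{"}(l_1,r_1)\texttt{"}\ldots\texttt{"}(l_N,r_N)$, applied to $\delta_m\ldots\delta_1$ and $\widetilde{\delta}_n\ldots\widetilde{\delta}_1$, yields an array with the same shape as the element-wise broadcast of $s$ with $\widetilde{s}$. That broadcast shape is exactly the broadcast frame $\Delta$. This identifies the $\Delta$ defined in the corollary via the J verb with the NumPy broadcast frame $\Delta$ appearing in the hypothesis on $U$.

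Next I will check the hypotheses of \cref{lem:increasedranks}. The rank sequences produced by the construction must satisfy $0\le l_1\le l_2\le\cdots$ and $0\le r_1\le\cdots$. This holds because the construction scans from right to left, so the indices $i,j$ never decrease. Each rank is either $(i-1,j)$ or $(i,j-1)$ at a step where the current indices are $i,j$, with $i\ge1$ when $\delta_i$ appears. When the next step has indices $(i',j')$ with $i'\ge i$ and $j'\ge j$, its rank components are at least the current ones. I will verify this small monotonicity claim by looking at the three cases, including the $\delta,\delta$ case, where nothing is emitted.

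The lemma's shapes are written as $\delta_{l_{N+1}}\ldots\delta_1$, so I need to set $l_{N+1}:=m$ and $r_{N+1}:=n$. For this to fit the lemma, I need $l_N\le m$ and $r_N\le n$. This is true because every emitted rank component is at most the number of non-broadcast dimensions seen so far. With this choice, the lemma's arrays are exactly the stripped frames followed by the cells.

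For the cell hypothesis, validity of $U$ on the given shapes means that $U$ maps cells of shapes $\gamma_{l'}\ldots\gamma_1$ and $\widetilde{\gamma}_{r'}\ldots\widetilde{\gamma}_1$ to shape $\Gamma$. Since $V$ agrees with $U$ on such arguments, $V$ produces shape $\Gamma$ on them. Also, $v$ maps 0-cells to 0-cells by assumption. The lemma with $k=N$ then gives shape $\Delta\Gamma$ for $V\texttt{"}(l',r')\texttt{"}(l_1+l',r_1+r')\ldots$. This matches the NumPy result shape $\Delta\Gamma$.

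The main obstacle is the bookkeeping in the second step: the monotonicity of the constructed rank pairs and the choice of $l_{N+1}=m$, $r_{N+1}=n$. I would verify these by induction along the right-to-left scan. Degenerate cases also need care. If $N=0$, the verb is $V\texttt{"}(l',r')$, and the base case of the lemma applies directly. Otherwise, frame agreement in J requires one stripped frame to be a prefix of the other at each level. That requirement is already guaranteed implicitly by \cref{thm:elementwise}, since the element-wise application succeeds with shape $\Delta$.
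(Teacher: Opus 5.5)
Your proposal is correct and follows the paper's argument: \cref{thm:elementwise} identifies the shape of the J result with the broadcast frame $\Delta$, and then \cref{lem:increasedranks} with $k=N$, $\Delta_N=\Delta$, $l_{N+1}=m$, $r_{N+1}=n$ gives $\Delta\Gamma$. Your extra bookkeeping on monotonicity and on $l_N\le m$, $r_N\le n$, which the paper leaves implicit, is sound, but state its justification precisely. Each emitted pair is exactly the number of non-broadcast dimensions strictly to the right of the current position in each array, so it is nondecreasing and bounded by $(m,n)$. The bare inequalities $i'\ge i$, $j'\ge j$ do not cover the passage from a step emitting $(i-1,j)$ to one emitting $(i',j'-1)$, which needs $j'>j$. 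Also, the condition you check, $0\le l_1$, is the one the lemma's proof actually uses; its printed hypothesis $l'\le l_1$ can fail here, for example when $l_1=0<l'$.
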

\begin{proof}
By theorem~\ref{thm:elementwise}, the constructed verb $v\texttt{"}(l_1,r_1)\texttt{"}\ldots\texttt{"}(l_N,r_N)$, when applied to arrays of shapes $\delta_m\ldots\delta_1$ and $\widetilde{\delta}_n\ldots\widetilde{\delta}_1$, produces an array of shape $\Delta$. Thus, in the notation of lemma~\ref{lem:increasedranks}, we may take $\Delta_N=\Delta$, $l_{N+1}=m$, $r_{N+1}=n$. Applying lemma~\ref{lem:increasedranks} with $k=N$ then gives the stated result.
\end{proof}

\subsection{Implementation}
The implementation of broadcasting using the generalized algorithm is presented in listing~\ref{lst:broadcastly}. The part of speech defined inside $\jt{\{\{}$, $\jt{\}\}}$ is recognized by the interpreter based on the special names used within it: $\jn{x}$, $\jn{y}$, $\jv{u}$, $\jv{v}$, $\jn{m}$ and $\jn{n}$. In particular, the presence of $\jv{u}$, $\jn{x}$, $\jn{y}$ in the definition of $\ja{Broadcastly}$ indicates that it is an adverb:
\begin{itemize}
    \item $\jv{u}$ denotes the verb to which the adverb will be applied.
    
    \item $\jn{x}$ and $\jn{y}$ denote the left and right arguments of the verb $\jv[r]{\jv{u}\ja{Broadcastly}}$, respectively.
\end{itemize}
In line 2, the dyadic rank of the \jv{u} is inspected, and the left and right ranks are assigned to the names \jn{l} and \jn{r}, respectively. Lines 3–4 define the verbs \jv{cs} and \jv{fr}. Given the value of rank as the left argument and an array as the right argument, \jv{cs} returns the shape of the cells, while \jv{fr} returns the frame. In line 5, the shapes of the left and right cells are computed using \jv{cs} and assigned to \jn{csx} and \jn{csy}, respectively. In line 6, the frames of the arguments across which broadcasting will be applied are computed using \jv{fr}. The frames are then aligned by prepending dimensions of length 1 to the frame of the argument with the lower rank, ensuring that both frames have the same length. The resulting frames are stored in \jn{fxy}, a two-row matrix whose first row corresponds to the frame of the left argument and whose second row corresponds to that of the right argument. The broadcast compatibility of the two frames is then checked. In line 8, the boolean matrix \jn{mxy} is constructed to identify the broadcast dimensions. It has the same shape as \jn{fxy}. This matrix is used in line 9 to compute \jn{ranks}, a two-column matrix containing pairs of left and right dyadic ranks, treating the frames as the entire array shapes. In line 10, these ranks are updated to account for the full array shapes, as described in \cref{subsec:generalization}. The verb's representation is constructed in line 11. Line 12 computes the new frames by removing the broadcast dimensions. Finally, these frames are used in line 13 to obtain the result by applying the constructed verb to arrays having the corresponding frames.
\begin{center}
\begin{minipage}{0.95\linewidth}
\begin{lstlisting}[mathescape,caption={Implementation of broadcasting in J.},captionpos=t,numbers=left,numberstyle=\tiny,label={lst:broadcastly}]
$\ja{Broadcastly}\jo{=:}\jt{\{\{}$
  $\jn{'l r'}\jo{=.}\jv{\}.}\ \jv{u}\ja{b.}\ \jn0$
  $\jv{cs}\jo{=.}\bo\bo\jv-\jc{@}\jv{<.}\ \jv{\#}\bc\ \jv{\{.}\ \jv{]}\bc\ \jv{\$}$
  $\jv{fr}\jo{=.}\jv-\jc{@}\jv{[}\ \jv{\}.}\ \jv{\$}\jc{@}\jv{]}$
  $\jn{csx}\jo{=.}\jn{l}\jv{cs}\jn{x}\ \jv{[}\ \jn{csy}\jo{=.}\jn{r}\jv{cs}\jn{y}$
  $\jn{fxy}\jo{=.}\bo\jn{l}\jv{fr}\jn{x}\bc\ \jv{,:}\jc{!.}\jn{1}\jc{\&}\jv{|.}\ \jn{r}\jv{fr}\jn{y}$
  $\jt{assert.}\bo\jv{+.}\ja{/}\jc{@:}\bo\jv{=}\jc{\&}\jn{1}\bc\ \jv{*.}\ja{/}\jc{@:}\jv{+.}\ \jv{=}\ja{/}\bc\ \jn{fxy}$
  $\jn{mxy}\jo{=.}\bo\jv=\ja/\ \jv{+.}\jc"\jn1\ \jv{\~{}:}\jc{\&}\jn1\bc\ \jn{fxy}$
  $\jn{ranks}\jo{=.}\bo\jv{+.}\jc{\&}\bo\jn{2}\jc{\&}\bo\jv>\ja/\ja{\textbackslash}\bc\bc\ja/\jc{@}\bo\jn{1}\jc{\&}\jv{,.}\bc\ \jv{\#}\ \jv{|:}\jc{@}\bo\jv-\ja{\~{}}\ \jv+\ja/\ja{\textbackslash}\jc"\jn{1}\bc\bc\ \jn{mxy}$
  $\jn{ranks}\jo{=.}\bo\jn{l}\jv,\jn{r}\bc\ \jv{-.}\ja{\~{}}\ \jn{ranks}\ \bo\jv{]}\ \jv,\ \jv+\jc"\jn{1}\bc\ \jn{csx}\ \jv,\jc{\&}\jv{\#}\ \jn{csy}$
  $\jn{repr}\jo{=.}\jv{u}\jc{`}\jn{''}\ \jv{<}\jc{F..}\bo\jn{\textquotesingle\textquotedbl\textquotesingle}\ \jv{,}\jc{\&}\jv{<}\ \bo\jv{,}\jc{\&}\jv{<}\ \jn{'0'}\jc{\&}\jv{;}\bc\ja{\~{}}\bc\ \jn{ranks}$
  $\jn{'fx fy'}\jo{=.}\jn{mxy}\ \jv{<}\jc{@}\jv{|.}\jc{@}\jv{\#}\jc{"}\jn{1}\ \jn{fxy}$
  $\bo\jn{x}\bo\jv{\$}\jv{,}\bc\ja{\~{}}\jn{fx}\jv,\jn{csx}\bc\ \jn{repr}\jc{`:}\jn{6}\ \jn{y}\bo\jv{\$}\jv{,}\bc\ja{\~{}}\jn{fy}\jv,\jn{csy}$
$\jt{\}\}}$
\end{lstlisting}
\end{minipage}
\end{center}

For example, \jv[r]{\jv{+}\ja{Broadcastly}} applied to arrays of shapes \jn{1 4 1 1 6} and \jn{1 2 3 1 1 5 6} returns an array of shape \jn{1 2 3 4 1 5 6}. The shape and elements of the result are the same as those obtained in NumPy when applying the \texttt{+} operator to such arrays.

The implementation is included in the \texttt{api/python3} repository, which provides a Python package for interoperability with J \cite{pythonapi}. Additional examples demonstrating the application of the adverb in J can be run directly online in Juno IDE (\url{https://jsoftware.github.io/juno/app/?s=broadcastly}).

\section{Conclusions}
Within this work, the relationship between verb rank in J and broadcasting in NumPy was analyzed, and NumPy broadcasting was expressed in terms of J rank, resulting in an implementation of operator (adverb) \ja{Broadcastly}. The implemented adverb can also be useful in practice. Through the API available as a J add-on \cite{pythonapi}, J code can be called directly from Python, and arrays can be converted between J and NumPy in both directions. This makes it convenient to combine code written in NumPy and J within a single project. $\ja{Broadcastly}$ further simplifies interoperability by allowing both approaches to multidimensional array processing to be used in J and simplifies the translation of NumPy code into J. As a result, users are no longer constrained by an earlier choice of language or by familiarity with only one approach to multidimensional array processing.

\section{Acknowledgments}
The author would like to thank Jacek Chrząszcz (University of Warsaw) and Henry Rich (Jsoftware Inc.) for their helpful comments and discussions.

\bibliographystyle{unsrt}
\bibliography{main}

\end{document}